\documentclass[journal]{IEEEtran}
\usepackage{amsfonts}
\usepackage{cite,graphicx,amsmath,amsthm}
\usepackage{subfigure}
\usepackage{fancyhdr}
\usepackage{dsfont}
\usepackage{array,color}
\usepackage{bm}
\usepackage{float}
\usepackage{algorithm}
\usepackage{algpseudocode}
\usepackage{multirow}
\usepackage{booktabs}
\usepackage{multirow}
\usepackage{makecell}

\allowdisplaybreaks[4]

\newtheorem{lemma}{Lemma}

\newtheorem{proposition}{Proposition}

\newtheorem{corollary}{Corollary}

\newtheorem{property}{Property}

\newtheorem{remark}{Remark}

\newtheorem{claim}{Claim}

\newtheorem{theorem}{Theorem}

\def\BibTeX{{\rm B\kern-.05em{\sc i\kern-.025em b}\kern-.08em
    T\kern-.1667em\lower.7ex\hbox{E}\kern-.125emX}}

\newcommand{\tabincell}[2]{\begin{tabular}{@{}#1@{}}#2\end{tabular}}

\begin{document}

\title{{Learning Centric Wireless Resource Allocation for Edge Computing: Algorithm and Experiment}}

\author{Liangkai Zhou, Yuncong Hong, Shuai Wang, Ruihua Han, Dachuan Li, Rui Wang, and Qi Hao   
\thanks{
Liangkai~Zhou, Yuncong~Hong, and Rui~Wang are with the Department of Electrical and Electronic Engineering, Southern University of Science and Technology (SUSTech), Shenzhen 518055, China (e-mail: 11510196@mail.sustech.edu.cn, hongyc@mail.sustech.edu.cn, wang.r@sustech.edu.cn).

Shuai~Wang is with the Department of Electrical and Electronic Engineering, and is also with the Department of Computer Science and Engineering, Southern University of Science and Technology (SUSTech), Shenzhen 518055, China. (e-mail: wangs3@sustech.edu.cn)

Ruihua~Han and Dachuan~Li are with the Department of Computer Science and Engineering, Southern University of Science and Technology (SUSTech), Shenzhen 518055, China (e-mail: hanruihuaff@gmail.com, dachuanli86@gmail.com).

Qi~Hao is with the Department of Computer Science and Engineering, Southern University of Science and Technology (SUSTech), Shenzhen 518055, China, and is also with the Sifakis Research Institute of Trustworthy Autonomous Systems, Shenzhen, China (e-mail: hao.q@sustech.edu.cn).

}
}
\maketitle

\begin{abstract}
Edge intelligence is an emerging network architecture that integrates sensing, communication, computing components, and supports various machine learning applications,
where a fundamental communication question is: how to allocate the limited wireless resources (such as time, energy) {to the simultaneous model training of heterogeneous learning tasks}?
Existing methods ignore two important facts: 1) different models have {heterogeneous demands on training data}; 2) there is a mismatch between the simulated environment and the {real-world} environment.
As a result, they could lead to low learning performance in practice.
This paper proposes the learning centric wireless resource allocation (LCWRA) scheme that maximizes the worst learning performance of multiple tasks.
{A}nalysis shows that the optimal transmission time has an inverse power relationship with respect to the generalization error.
Finally, both simulation and experimental results are provided to verify the performance of the proposed LCWRA scheme {and its robustness in real implementation}.
\end{abstract}

\begin{IEEEkeywords}
Edge computing,  machine learning, time division multiple access, robotic communication, vehicular communication.
\end{IEEEkeywords}

\IEEEpeerreviewmaketitle
\section{Introduction}

\IEEEPARstart{E}dge {computing has been proposed in recent years as a solution to the high transmission cost in cloud computing and {limited} performance in local computing \cite{edge1}.} It pushes the cloud services from the network core to the network edges that are in closer proximity to {Internet of Things (IoT)} devices \cite{edge2}.
With the edge computing platform, various machine learning applications can reside at the edge, giving rise to an emerging paradigm termed edge intelligence \cite{edge3}.
Edge intelligence can be divided into edge model training and edge model inference.
In general, there are two ways to implement edge model training: centralized training and distributed training.
Centralized training collects sensing data generated from IoT devices and trains the learning models at the edge \cite{lcpa,lcpa2}.
Distributed training deploys individual learning models at user terminals, and all the users upload their local model parameters periodically to the edge for model aggregation and broadcasting \cite{distributed}.
Since the IoT devices {usually} {can hardly} process the data due to limited computation power, this paper focuses on centralized training.

{In centralized training, a fundamental communication question is: how to allocate the limited wireless resources (such as time, energy) to train multiple learning tasks simultaneously?}
{Existing methods such as the time fairness scheme, the throughput fairness scheme \cite{fairness}, or the importance-aware transmission scheme \cite{importance, importance2} ignore the heterogeneous requests of learning tasks on the data amount, probably leading} to low learning performance.
Furthermore, all these schemes are verified only via computer simulation.
{Due} to the transmission overhead of control signals \cite{overhead} (which results in reduced number of useful data) and random nature of wireless channels \cite{packetloss} (e.g., disconnection/reconnection procedures), there is a mismatch between the simulated environment and the real-world environment.
Therefore, apart from computer simulation, hardware experiments are required to demonstrate the robustness of resource allocation algorithms against practical uncertainties.

To address the above problems, this paper adopts the statistical mechanics of learning \cite{sm} {to predict the learning accuracy} of different tasks {versus the amount of training data,} and reveal the relationship between the learning performance and the wireless resources/channels.
By leveraging the difference of convex programming (DCP), {a} learning centric wireless resource allocation (LCWRA) {maximizing} the learning performance is proposed.
Analysis shows that the optimal transmission time has an inverse power relationship with respect to the generalization error.
This empowers us to use a ranking-based procedure to derive the optimal time when the network energy is unlimited.
Simulation results are provided to verify the performance of the proposed LCWRA scheme.
Experimental results for robotic image dataset collection and vehicular point-cloud dataset collection are further carried out to demonstrate the effectiveness {and robustness} of LCWRA in real systems.

\section{System Model and Problem Formulation}

Consider an edge intelligence system consisting of {one} server and $K$ users with dataset $\left\{\mathcal{D}_{1}, \cdots, \mathcal{D}_{K}\right\}$.
There are $M$ learning models to be trained at the edge, and let $\mathcal{Y}_{m} \in \left \{ \mathcal{Y}_{1} ,\cdots, \mathcal{Y}_{M} \right \}$ denote the users who transmit training samples for the model $m \in \{1, \cdots, M \}$.
The total number of training samples for model $m$ collected at the edge is
\begin{equation}
v_{m} = \sum_{ k \in \mathcal{Y}_{m}} \left\lfloor \frac{d_{k}}{D_k} \right\rfloor +c_{m} \approx \sum_{ k \in \mathcal{Y}_{m}} \frac{d_{k}}{D_k} +c_{m}, \label{v_m}
\end{equation}
where $d_{k}$ represents the data amount transmitted by user $k \in \{ 1,\cdots, K \}$, $D_{k}$ represents the data size per sample, and $c_{m}$ denotes the number of historical samples available at the edge for model $m$. The approximation is based on $\lfloor x \rfloor \approx x$ when $x \gg 1$.
In this paper, it is assumed that the datasets $\left\{\mathcal{D}_{1}, \cdots, \mathcal{D}_{K}\right\}$ are independent and identically distributed.\footnote{{Under non-IID data distribution among users, the upper bound data-rate constraint may be imposed on the users who transmit low-quality data, and the lower bound data-rate constraint may be imposed on the users who transmit high-quality data.}}
Therefore, the generalization error, denoted as $\Psi_{m}$ is a function of the number of samples $v_m$.
To the best of the {authors'} knowledge, there is no exact expression of $\Psi_{m}(v_{m})$.
To this end, an inverse power law model \cite{lcpa, lcpa2, curve1, curve2}, which is supported by statistical mechanics of learning \cite{sm}, is adopted to approximate $\Psi_{m}$ as
\begin{equation}
	\Psi _{m} \approx a_{m}\, v_{m}^{-b_{m}}, \label{Psi_m}
\end{equation}
where $a_{m}, b_{m} > 0 $ are tuning parameters.

Our aim is to minimize $\{\Psi_{m}{|\forall m}\}$ by proper wireless resource allocations.
Specifically, {we} consider the sample transmission in a single-hop time division multiple access (TDMA)-based wireless system.
Within a period of $T_{\rm{max}}$, the user $k$ is assigned a duration of $t_{k} \in \mathbb{R}_{+}$ to transmit samples.
As a result, the data amount transmitted by user $k$ is
\begin{equation}
	d_{k} = t_{k}R_{k}, \label{d_k}
\end{equation}
where $R_{k}$ represents the data-rate of user $k$.
Since the network adopts the TDMA, the transmission of each user is scheduled by a wireless controller and therefore the transmission is collision-free\cite{A1}.\footnote{If the system adopts the carrier sense multiple access with collision avoidance (CSMA-CA) protocol, orderly and collision-free transmissions would be an unrealistic assumption.
In such a case, the proposed learning centric wireless resource allocation framework is still applicable by adding a factor $\alpha$ with $0<\alpha<1$ to $R_k$, where $\alpha$ represents the transmission loss due to collisions.
Moreover, by combining the interference management techniques in \cite{C7} and \cite{C8}, the proposed LCWRA is applicable to other access schemes such as the OFDMA and the NOMA.}
Therefore, $R_k$ is determined by the bandwidth $B$ allocated to the system and the signal-to-noise ratio at user $k$, {i.e.,}
\begin{align}
R_{k} &= B\log_{2} \left( 1+\frac{|h_{k}|^{2}E_{k}}{t_k \sigma^{2}}\right)
, \label{R_k}
\end{align}
where $E_{k}\in \mathbb{R}_{+}$ represents the transmit energy at user $k$,  $h_{k} \in \mathbb{C}$ represents the channel\footnote{If the channel is time-variant, $|h_{k}|^2$ is replaced by its expectation $\mathbb{E}[|h_{k}|^2]$.} between the user $k$ and the edge, and $\sigma^{2}$ denotes the power of Gaussian white noise.
Hence, the transmit power at user $k$ is $E_{k}/t_{k}$.
The transmit power is limited by the hardware structure as $E_{k}/t_{k}\leq P_{\rm{max}}$, where $P_{\rm{max}}$ denotes the peak power.
In applications such as sensor networks, the total energy is usually limited by an upper bound $E_{\mathrm{max}}$ for the consideration of energy efficiency \cite{C9}, which gives $\sum_{k=1}^KE_k\leq E_{\rm{max}}$.

In the considered edge computing system driven by learning tasks, the aim is to minimize generalization errors subject to  communication constraints, which leads to the following optimization problem{.}
\begin{subequations}
	\label{P}
	\begin{align}
	\textrm{P}1: \quad& \min_{\vec{\mathbf{t}}, \vec{\mathbf{e}}}~\max_{m=1,\cdots,M} \nonumber\\
	&\left\{ a_{m}  \left( \sum_{k \in \mathcal{Y}_{m}} \frac{t_{k}B}{D_{k}} \log_{2} \left( 1+ \frac{E_{k} |h_{k}|^{2} }{t_{k} \sigma^{2}} \right) +c_{m} \right)^{-b_{m}} \right\},\nonumber  \\
	\textrm{s.t.} \quad& \sum_{k=1}^{K} t_{k} \leq T_{\textrm{max}}, \quad \sum_{k=1}^{K} E_{k} \leq E_{\textrm{max}}, \label{P_ca} \\
	&\frac{E_{k}}{t_{k}} \leq P_{\rm{max}}, \quad \forall k, \label{P_cc} \\
	&\frac{t_{k}B}{D_{k}}\log_{2} \left( 1+ \frac{E_{k} |h_{k}|^{2} }{t_{k} \sigma^{2} }\right) \leq |\mathcal{D}_{k}|, \quad \forall k, \label{P_cd} \\
	&t_{k} \geq 0, \ E_{k} \geq 0, \quad \forall k, \label{P_ce}
	\end{align}
\end{subequations}
where the min-max operation in the objective function aims to warrant the worst learning performance of all models,  {$\vec{\mathbf{t}}=[t_1,\cdots,t_K]^T$ and $\vec{\mathbf{e}}=[E_1,\cdots,E_K]^T$.}

\textbf{Use Case (Vehicular Point-Cloud Dataset Collection):} For autonomous vehicles, object detection based on artificial intelligence (AI) models is needed for collision avoidance \cite{A5,shaoshuai}.
To train AI models, AI companies (e.g., Google, Baidu) arrange vehicles to collect sensing data in various environment (as shown in Fig.~1).
The time is limited as the sensing data is massive.
Thus data selection and vehicular wireless resource allocation are required.
After data transmission, the data is manually annotated and used for training AI models \cite{A6}.

\begin{figure*}[!t]
	\centering
	\includegraphics[width=0.95\textwidth]{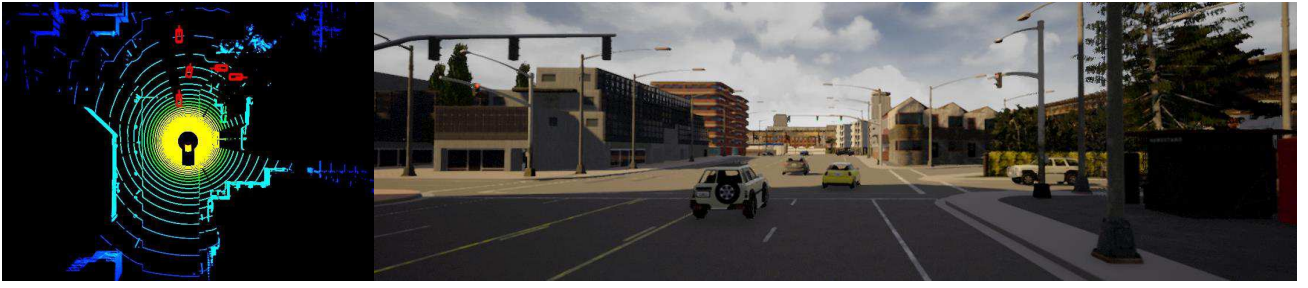}
	\caption{The use case of vehicular point-cloud dataset collection.}
	\label{map}
\end{figure*}

\section{DCP-based LCWRA Algorithm}

{For notation convenience,} we first define function
\begin{align}
&d_{k}=\Theta_k(t_{k},E_{k})=t_kB\log_{2} \left( 1+E_{k} |h_{k}|^{2}/(t_{k} \sigma^{2})\right).
\end{align}
As $\Theta_{k}(t_{k},E_{k})$ is the perspective transformation of logarithm function $B\log_{2} \left( 1+E_{k} |h_{k}|^{2}/\sigma^{2}\right)$ with respect to $t_k$, it is concave {with respect to} both $t_{k}$ and $E_{k}$.
On the other hand, as $a_{m}>0$ and $b_{m}>0$, $a_{m}\,x^{-b_{m}}$ is a decreasing and convex function of $x$.
According to the composition law \cite{opt1}, $a_{m} \left( \sum_{k \in \mathcal{Y}_{m}} \Theta_k(t_k,E_k)/D_{k}+c_m\right)^{-b_{m}}$ is convex {with respect to $\{t_{k}|\forall k\}$ and $\{E_{k}|\forall k\}$} .
Because a point-wise maximum over a group of convex functions is convex, the objective function of P1 is convex.
Adding to the fact that constraints \eqref{P_ca}, \eqref{P_cc}, and \eqref{P_ce} are all linear, the only nonconvex part in {P1} is the constraint \eqref{P_cd}.

To address the nonconvexity of \eqref{P_cd}, we leverage {difference of convex programming (DCP)} \cite{mm} to replace the left side of constraint \eqref{P_cd} {with its} first order Taylor Expansion, which is an upper bound. Specifically, for any feasible solution $\vec{\mathbf{x}}^{\diamond} = \left[t_{k}^{\diamond}, E_{k}^{\diamond} \right]^{T}$ {of P1}, we define the first order Taylor Expansion of $\Theta_{k}$ as a surrogate function $\widehat{\Theta}_{k}$:
\begin{equation}
	\label{Taylor}
\widehat{\Theta}_k(\vec{\mathbf{x}}|\vec{\mathbf{x}}^{\diamond}) = \Theta_{k}\left( \vec{\mathbf{x}}^{\diamond} \right)+ \left( \vec{\mathbf{x}} - \vec{\mathbf{x}}^{\diamond} \right)^{T} \nabla \Theta_{k} \left( \vec{\mathbf{x}}^{\diamond} \right).
\end{equation}

Since $\Theta_{k}$ is a concave function, we have $\widehat{\Theta}_{k}(\vec{\mathbf{x}}|\vec{\mathbf{x}}^{\diamond}) \geq \Theta_{k} (\vec{\mathbf{x}})$.
Therefore, we can replace $\Theta_{k}$ in \eqref{P_cd} with $\widehat{\Theta}_{k}$ at a feasible point $\vec{\mathbf{x}}^{\diamond} = \{ \vec{\mathbf{t}}^{\diamond}, \vec{\mathbf{e}}^{\diamond}\}$ and formulate a surrogate problem of {\rm{P1}}.
Since the feasible set has shrunk, the surrogate problem would generate another feasible solution $\vec{\mathbf{x}}'$, which can be used to derive a tighter surrogate problem.
This leads to an iterative refinement procedure as follows: with an initial {feasible solution} $ \vec{\mathbf{x}}^{(0)} = \left\{ \vec{\mathbf{t}}^{(0)}, \vec{\mathbf{e}}^{(0)} \right\}$, we update $\vec{\mathbf{x}}^{(n)}$ at the $n$-th iteration as
\begin{equation}
	\label{surrogate_problem}
	\begin{aligned}
		&\vec{\mathbf{x}}^{(n)} = \left\{ \vec{\mathbf{t}}^{(n)}, \vec{\mathbf{e}}^{(n)} \right\}\\
		&=\arg \min_{\vec{\mathbf{t}}, \vec{\mathbf{e}}} \max_{m}~\left\{ a_{m}  \left( \sum_{k \in \mathcal{Y}_{m}} \frac{\Theta_k(t_k,E_k)}{D_{k}} +c_{m} \right)^{-b_{m}} \right\}
	\end{aligned}
\end{equation}
under the constraints of \eqref{P_ca}, \eqref{P_cc}, \eqref{P_ce} and
\begin{equation}
	\label{surrogate_constraint}
	\widehat{\Theta}_{k}(t_{k}, E_{k}|t_{k}^{(n-1)}, E_{k}^{(n-1)})/D_k \leq |\mathcal{D}_{k}|, \quad \forall k.
\end{equation}
The above problem is convex, which can be readily solved by CVX software \cite{opt1}.
The entire iterative procedure is termed DCP-based LCWRA, and its convergence property is summarized below.
\begin{theorem}
	If $\vec{\mathbf{x}}^{\diamond(0)}$ is a feasible {solution to \rm{P1}}, any limit point $\vec{\mathbf{x}}^{*}$ of the sequence $ \left\{\vec{\mathbf{x}}^{\diamond(0)},  \vec{\mathbf{x}}^{\diamond(1)}, \cdots \right\}$ is
Karush-Kuhn-Tucher solution to problem {\rm{P1}}.
\end{theorem}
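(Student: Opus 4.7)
The plan is to apply the standard majorization-minimization (MM) / difference-of-convex convergence analysis, adapted to the case where the surrogate is applied to a constraint rather than to the objective. I would organize the argument in three steps: (i) verify three MM-type properties of the surrogate $\widehat{\Theta}_k$ defined in \eqref{Taylor}; (ii) show that every iterate stays feasible for P1 and that the sequence of objective values is monotonically non-increasing; and (iii) pass to the limit in the KKT system of the surrogate problem to recover the KKT system of P1.

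For (i), three properties suffice: (a) the upper-bound property $\widehat{\Theta}_k(\vec{\mathbf{x}}|\vec{\mathbf{x}}^{\diamond}) \ge \Theta_k(\vec{\mathbf{x}})$, immediate from concavity of $\Theta_k$ and already noted in the paper; (b) the tightness identity $\widehat{\Theta}_k(\vec{\mathbf{x}}^{\diamond}|\vec{\mathbf{x}}^{\diamond}) = \Theta_k(\vec{\mathbf{x}}^{\diamond})$, direct from \eqref{Taylor}; and (c) the gradient-matching identity $\nabla_{\vec{\mathbf{x}}} \widehat{\Theta}_k(\vec{\mathbf{x}}|\vec{\mathbf{x}}^{\diamond})\big|_{\vec{\mathbf{x}}=\vec{\mathbf{x}}^{\diamond}} = \nabla \Theta_k(\vec{\mathbf{x}}^{\diamond})$, also direct. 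For (ii), property (a) implies that \eqref{surrogate_constraint} is a tightening of \eqref{P_cd}, so every solution of the surrogate is feasible for P1; property (b) implies that $\vec{\mathbf{x}}^{(n-1)}$ itself satisfies \eqref{surrogate_constraint} at the $n$-th iteration whenever it was feasible for P1, hence the surrogate optimum cannot worsen the objective. Because the objective of \eqref{surrogate_problem} is identical to that of P1 and is bounded below by zero, monotone convergence of the objective sequence follows.

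The main obstacle is step (iii). I would first recast the min-max objective in epigraph form with an auxiliary scalar $\eta$ to obtain a smooth convex surrogate, then write out its KKT system (stationarity, primal feasibility, dual feasibility, complementary slackness) at $\vec{\mathbf{x}}^{(n)}$, with multipliers attached to \eqref{P_ca}, \eqref{P_cc}, \eqref{P_ce} and \eqref{surrogate_constraint}. Under Slater's condition for P1, which holds whenever $T_{\mathrm{max}}, E_{\mathrm{max}}, P_{\mathrm{max}}, |\mathcal{D}_k|$ are positive so that strictly interior points exist, the sequence of multipliers is bounded; along the subsequence realizing $\vec{\mathbf{x}}^{(n)} \to \vec{\mathbf{x}}^{*}$, I would extract a further convergent subsequence of multipliers. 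The gradient-matching identity (c) then turns the surrogate stationarity condition into the stationarity condition of P1 at $\vec{\mathbf{x}}^{*}$, while tightness (b) turns the surrogate complementary slackness with respect to \eqref{surrogate_constraint} into complementary slackness with respect to \eqref{P_cd}; primal and dual feasibility pass to the limit by continuity. The delicate ingredients here are verifying a constraint qualification and the boundedness of the dual sequence; a cleaner alternative is to invoke a general MM-type KKT convergence theorem, of which the present DCP iteration is a special case.
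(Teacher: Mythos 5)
Your proposal follows essentially the same route as the paper: the paper's proof consists precisely of verifying your three surrogate properties (upper bound, tightness at the expansion point, and gradient matching) and then invoking the general MM/successive-convex-approximation convergence theorem of \cite{mm}. Your step (iii) simply spells out the limit-passing argument (constraint qualification, bounded multipliers, continuity) that the paper delegates to that citation, and your own suggested ``cleaner alternative'' is exactly what the authors do.
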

\begin{proof}
It can be shown that $\widehat{\Theta}_{k}(\vec{\mathbf{x}}^{*}|\vec{\mathbf{x}}^{*}) =\Theta_{k} (\vec{\mathbf{x}}^{*})$ and $\nabla \widehat{\Theta}_{k}(\vec{\mathbf{x}}^{*}|\vec{\mathbf{x}}^{*}) =\nabla \Theta_{k} (\vec{\mathbf{x}}^{*})$.
Combining the above results and $\widehat{\Theta}_{k}(\vec{\mathbf{x}}|\vec{\mathbf{x}}^{*}) \geq \Theta_{k} (\vec{\mathbf{x}})$, and according to \cite{mm}, the theorem is immediately proved.
\end{proof}

\section{Ranking-Based LCWRA}
In the previous section, problem {\rm{P1}} is solved numerically.
In order to {obtain insights on} how learning tasks influence the resource allocation strategy, an analytical solution is derived in this section.
In particular, {it is assumed} that $E_{\rm{max}}$ is sufficiently large {such} that $E_{\rm{max}}\geq T_{\rm{max}}P_{\rm{max}}$.
Then, all users will transmit with their maximum power, i.e., $E_{k}/t_{k}=P_{\rm{max}}$, and the constraints \eqref{P_cc} and second part of \eqref{P_ca} are removed.
The data-rate is fixed to $R_k=B\log_{2} \left( 1+ P_{\rm{max}} |h_{k}|^{2}/\sigma^{2} \right)$.
Therefore, problem {\rm{P1}} with large $E_{\rm{max}}$ is given by
\begin{subequations}
	\label{P2}
	\begin{align}
		\textrm{P}2:\min_{\vec{\mathbf{t}}, u} ~~&u, \nonumber\\
		\textrm{s.t.}\quad& u\geq a_{m}\Big( \sum_{k \in \mathcal{Y}_{m}} \frac{t_{k}R_k}{D_{k}}+c_{m} \Big)^{-b_{m}}, \quad \forall m, \label{P2_ca}\\
		& \sum_{k=1}^{K} t_{k} \leq T_{\textrm{max}}, \label{P2_cb}\\
		&\frac{t_{k}R_k}{D_{k}} \leq |\mathcal{D}_{k}|, \ t_{k} \geq 0, \quad \forall k,\label{P2_cc}
	\end{align}
\end{subequations}
where $u$ is a slack variable denoting the level of generalization error.
An efficient way to determine {the optimal $u$, denoted as} $u^*$ in $\mathrm{P}2$ is to use bisection algorithm.
Specifically, given an upper bound for the searching interval, say $u_{\mathrm{max}}$, and a lower bound $u_{\mathrm{min}}$, the trial point is set to $u^\diamond=(u_{\mathrm{max}}+u_{\mathrm{min}})/2$.
If $\mathrm{P}2$ with $u=u^\diamond$ is feasible, the upper bound is updated as $u_{\mathrm{max}}=u^\diamond$; otherwise, the lower bound is updated as $u_{\mathrm{min}}=u^\diamond$.
The process is repeated until $|u_{\mathrm{max}}-u^\diamond|<\epsilon$, where $\epsilon$ is a small positive constant to control the accuracy.
Initially, we set $u_{\mathrm{max}}=1$ and $u_{\mathrm{min}}=0$.

Now the remaining question is how to provide a feasibility check of $\mathrm{P}2$ given $u=u^\diamond$.
To this end, we first minimize the transmission time via the following problem:
\begin{subequations}
	\label{P3}
	\begin{align}
		\textrm{P}3:\min_{\vec{\mathbf{t}}} ~~&\sum_{k=1}^{K} t_{k}, \nonumber\\
		\textrm{s.t.}\quad& \sum_{k \in \mathcal{Y}_{m}} \frac{t_{k}R_k}{D_{k}}\geq \left( \frac{u^{\diamond}}{a_{m}} \right)^{-\frac{1}{b_{m}}} - c_{m}, \quad \forall m,\\
		&\frac{t_{k}R_k}{D_{k}} \leq |\mathcal{D}_{k}|, \ t_{k} \geq 0, \quad \forall k,\label{P2_cc},
	\end{align}
\end{subequations}
and then check if $\sum_{k=1}^{K} t_{k}^*\leq T_{\rm{max}}$.
If so, problem $\mathrm{P}2$ with $u=u^\diamond$ is feasible;
otherwise, the time budget cannot support learning performance $u^\diamond$ and $\mathrm{P}2$ with $u=u^\diamond$ is infeasible.

The transformation from $\mathrm{P}2$ to $\mathrm{P}3$ via bisection allows us to decouple the time variables among different groups.
In other words, $\mathrm{P}3$ can be readily decomposed into subproblems $\mathrm{P}3[1],\cdots,\mathrm{P}3[M]$ {for all the learning models respectively,} and each $\mathrm{P}3[m]$ can be solved analytically by a ranking-based procedure.
{More specifically, define $U_{i,m}=R_k/D_k$, $V_{i,m}=|\mathcal{D}_k|$, and $z_{i,m}=t_k$, where $k\in\mathcal{Y}_m$ and $R_k/D_k$ is the $i$-th largest element in the set $\{R_k/D_k, k\in\mathcal{Y}_m\}$.
Then finding the optimal $t_k$ is equivalent to finding the optimal $z_{i,m}$, which can be achieved by solving the following problem.}
\begin{subequations}
	\label{P3[m]}
	\begin{align}
	\textrm{P}3[m]: \min_{\{z_{i, m}\}}\quad&\sum_{i = 1}^{|\mathcal{Y}_m|} z_{i, m}, \nonumber\\
	\textrm{s.t.} \quad& \sum_{i = 1}^{|\mathcal{Y}_{m}|} U_{i,m}z_{i,m} \geq \left( \frac{u^{\diamond}}{a_{m}} \right)^{-\frac{1}{b_{m}}} - c_{m}, \label{P3_ca}\\
	&U_{i,m}z_{i,m} \leq V_{i,m}, \quad i =1,...,|\mathcal{Y}_{m}|, \label{P3_cb} \\
	&z_{i,m} \geq 0, \quad i=1,...,|\mathcal{Y}_{m}|. \label{P3_cc}
	\end{align}
\end{subequations}
The optimal solution to $\textrm{P}3[m]$ is summarized in the following theorem.
\begin{theorem}
The optimal solution $z^{*}_{i,m}$ to $\textrm{P}3[m]$ is $z^{*}_{i,m} (u^{\diamond}) =0$ if $\left(u^{\diamond}/a_{m}\right) ^{-\frac{1}{b_{m}}} - c_{m}\leq0$. Otherwise,
	\begin{align}
	&z^{*}_{i,m} (u^{\diamond}) =
	\nonumber\\
	&
	\frac
	{\min \left(V_{i,m}, \left( \frac{u^{\diamond}}{a_{m}} \right) ^{-\frac{1}{b_{m}}} - c_{m} - \sum_{j = 1}^{i-1} U_{j,m}z^{*}_{j,m}(u^{\diamond})\right)}
	{U_{i,m}}. 	\label{P2_solution}
	\end{align}
\end{theorem}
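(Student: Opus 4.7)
The plan is to prove Theorem 2 by an exchange argument tailored to the ordering $U_{1,m}\geq U_{2,m}\geq\cdots\geq U_{|\mathcal{Y}_m|,m}$, after dispatching the trivial case. If $\left(u^{\diamond}/a_{m}\right)^{-1/b_{m}}-c_{m}\leq 0$, then the all-zero vector is feasible, trivially satisfies \eqref{P3_ca}, and attains the obvious lower bound for $\sum_{i} z_{i,m}$; this yields the first clause of the theorem.

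For the nontrivial case $\left(u^{\diamond}/a_{m}\right)^{-1/b_{m}}-c_{m}>0$, I would first observe that $\textrm{P}3[m]$ is a linear program and argue that constraint \eqref{P3_ca} must be active at any optimum: otherwise, any positive $z_{i,m}^{*}$ could be decreased slightly without violating feasibility, strictly reducing the objective. The crux is then to show that the allocation must be \emph{greedy in decreasing $U_{i,m}$ order}. Suppose for contradiction that an optimum $\{z^*_{i,m}\}$ admits indices $i<j$ with $z^*_{j,m}>0$ and $z^*_{i,m}<V_{i,m}/U_{i,m}$. For a sufficiently small $\delta>0$, the perturbation $z^*_{j,m}\mapsto z^*_{j,m}-\delta$ and $z^*_{i,m}\mapsto z^*_{i,m}+\delta\,U_{j,m}/U_{i,m}$ preserves $\sum_{i} U_{i,m}z_{i,m}$ (so \eqref{P3_ca} still holds with equality) and respects \eqref{P3_cb}--\eqref{P3_cc}, while the objective changes by $\delta\,(U_{j,m}/U_{i,m}-1)\leq 0$, strictly negative whenever $U_{i,m}>U_{j,m}$, contradicting optimality. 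Ties in $U_{i,m}$ are harmless, as any split of the same aggregate time among equal-$U$ users yields the same objective.

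Once the greedy "fill-in-decreasing-$U$" structure is established, the recursion follows directly: for each $i$, the samples still required after the first $i-1$ slots equal $\left(u^{\diamond}/a_{m}\right)^{-1/b_{m}}-c_{m}-\sum_{j=1}^{i-1} U_{j,m}z^{*}_{j,m}$, while user $i$ contributes at most $V_{i,m}$ samples by \eqref{P3_cb}. Hence the optimal time for user $i$ is the smaller of the cap $V_{i,m}$ and this residual, divided by $U_{i,m}$, which is exactly \eqref{P2_solution}; once the cumulative samples attain the target, the residual vanishes and all subsequent $z^{*}_{i,m}$ become zero. The main obstacle is the bookkeeping behind the admissibility of $\delta$ in the exchange step, namely ensuring $z^{*}_{j,m}-\delta\geq 0$ and $z^{*}_{i,m}+\delta\,U_{j,m}/U_{i,m}\leq V_{i,m}/U_{i,m}$ simultaneously; both slack quantities are strictly positive under the standing hypotheses on $i$ and $j$, so such a $\delta$ exists and the argument closes cleanly.
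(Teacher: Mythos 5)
Your proof is correct, but it takes a more elementary route than the paper. The paper proceeds via Lagrangian duality: it writes down the KKT conditions of $\textrm{P}3[m]$, shows $\gamma_{1,m}=0$ by contradiction (using essentially the same mass-shifting perturbation you use), deduces $\lambda_m>0$ and hence activity of the sample constraint from complementary slackness, and then splits into cases on whether the multiplier $\beta_{1,m}$ of the cap constraint is zero to obtain $z^*_{1,m}$, finishing by induction on the reduced subproblem. You instead use the exchange argument directly as the main engine: activity of the sample constraint by a shrinking argument, the greedy ``fill in decreasing $U$'' structure by showing that any optimum with $z^*_{j,m}>0$ and an unsaturated earlier index $i$ can be strictly improved when $U_{i,m}>U_{j,m}$, and then the recursion \eqref{P2_solution} falls out. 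Your version avoids duality entirely and is, if anything, cleaner; it also explicitly addresses ties $U_{i,m}=U_{j,m}$, which the paper's case analysis silently assumes away (it uses $U_{1,m}>U_{i,m}$ strictly for $i>1$). What the KKT route buys is a more systematic template that generalizes if additional coupling constraints are added, but for this particular LP both arguments reach the same induction on the residual subproblem, and your feasibility bookkeeping for the perturbation (positivity of $z^*_{j,m}$ and slack in the cap on index $i$) is handled correctly. One shared omission, not a flaw specific to your argument: neither proof notes that the formula presupposes $\textrm{P}3[m]$ is feasible, i.e.\ $\sum_i V_{i,m}$ is at least the required residual sample count.
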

\begin{proof}
See \cite[Appendix A]{arxiv}.
\end{proof}

According to \textbf{Theorem 2}, if $k = \arg \max_{j \in \mathcal{Y}_{m}} R_{j}$, we can set $i=1$ in \eqref{P2_solution}, which gives
\begin{equation}
\label{P2_solution_first}
t^{*}_{k} = {\frac{D_{k}}{R_k}}\min\left\{ |\mathcal{D}_{k}|, \left( \frac{u^{\diamond}}{a_{m}} \right)^{-\frac{1}{b_{m}}} -c_{m} \right\},
\end{equation}
This proves an inverse power relationship between the transmission time and the learning error $u^\diamond$.
On the other hand, if $k\neq\arg \max_{j \in \mathcal{Y}_{m}} R_{j}$, we need to compute the required number of samples $\left(u^{\diamond}/a_{m}\right)^{-\frac{1}{b_{m}}}$ for task $m$, and then subtract the number of samples transmitted by other users in group $\mathcal{Y}_m$ with higher data-rates than $R_{k}$.
In other words, the optimal time of a particular user $k$ would not be affected by the users in other groups or those in the same group but with lower data-rates than $R_{k}$.
Therefore, the entire algorithm for solving $\rm{P}2$ based on \textbf{Theorem 2} is termed ranking-based LCWRA.

\section{Simulation Results}

In this section, simulation results are provided to verify the performance of the proposed algorithms.
We consider the case of $K=4$ and $M=2$ with $\mathcal{Y}_1=\{1\}$ and $\mathcal{Y}_2=\{2,3,4\}$, where the first task is to classify MNIST dataset \cite{MNIST} via convolutional neural network (CNN), and the second task is to classify digits dataset in Scikit-learn \cite{sklearn} via support vector machine (SVM).
For CNN, it has $6$ layers (as shown in Fig.~1 in \cite{lcpa}), and the training procedure is implemented via Adam optimizer with a learning rate of $10^{-4}$ and a mini-batch size of $100$.
For SVM, it uses penalty coefficient $C=1$ and Gaussian kernel function $K(\mathbf{x}_i,\mathbf{x}_j)=\mathrm{exp}\left(-\widetilde{\gamma}\, \|\mathbf{x}_i-\mathbf{x}_j\|_2^2\right)$ with $\widetilde{\gamma}=0.001$.
The data size of each sample is $D_1=6276~\mathrm{bits}$ for the MNIST dataset and $D_2=324~\mathrm{bits}$ for the digits dataset in Scikit-learn.
It is assumed that there are $A_1=300$ CNN samples and $A_2=200$ SVM samples in the historical dataset.
The bandwidth $B=180~\mathrm{kHz}$ \cite{iot3}.
The channel of user $k$ is generated according to $\mathcal{CN}(0, \varrho)$ where $\varrho= -90~\rm{dB}$ is the path loss \cite{massive2}, and the noise power is set as $-130~\rm{dBm/Hz}$.
The simulation is repeated with $10$ independent channels, and each point in the figure is generated by averaging the $10$ monte-carlo runs.

The tuning parameters $\{a_{m}\}$ and $\{b_{m}\}$ are fitted using MNIST dataset and the digit dataset in Scikit-learn toolbox, respectively.
For CNN, we vary the sample size $v_1$ as $(v^{(1)}_{1}, v^{(2)}_{1}, \cdots) = (100, 150, 200, 300)$ and train the CNN for $5000$ iterations.
The classification errors on a test dataset with another $1000$ MNIST samples are given by $(0.2970, 0.2330, 0.2150, 0.1180)$.
For SVM, we vary the sample size $v_1$ as $(v^{(1)}_{1}, v^{(2)}_{1}, \cdots) =(30, 50, 100, 200)$.
The classification errors on a test dataset with another $797$ samples are given by $(0.4774, 0.2513, 0.2010, 0.1445)$.
Via the nonlinear least squares fitting, we have $(a_{1}, b_{1}) = (7.3, 0.69)$ and $(a_{2}, b_{2}) = (6.24, 0.72)$.
In practice, $\{a_m, b_m\}$ are estimated via extrapolation of learning curves \cite{curve1,curve2} and exact $\left\{a_m, b_m\right\}$ is not available.
To this end, we also consider the practical LCWRA scheme with estimation errors of $(a_m,b_m)$ being $10\%$, i.e., $a_{1}\sim \mathcal{U}(7.3-0.73, 7.3+0.73), b_{1}\sim \mathcal{U}(0.69-0.069, 0.69+0.069)$, $a_{2}\sim \mathcal{U}(6.24-0.624, 6.24+0.624), b_{2}\sim \mathcal{U}(0.72-0.072, 0.72+0.072)$, where $\mathcal{U}(x,y)$ represents uniform distribution within $[x,y]$.

Based on the above settings, the classification error (i.e., the larger classification error between CNN and SVM) versus the total energy budget $E_{\rm{max}} = (0.5, 1.0, 1.5, 2.0)$ in Joule at $T_{\rm{max}} = 50~\rm{s}$ is shown in Fig.~2a.
The maximum transmit power is set to $P_{\rm{max}} = (0.03, 0.06, 0.09, 0.12)$ in Watt.
Both perfect (i.e., $\{a_m,b_m\}$ involve no estimation error) and imperfect (i.e., $\{a_m,b_m\}$ involve $10\%$ estimation error) DCP-based LCWRA schemes are simulated.
Besides the proposed DCP-based LCWRA algorithm, two benchmark schemes are also simulated: 1) the throughput fairness scheme \cite{fairness}; 2) the time fairness scheme.
It can be seen that no matter the parameters $\{a_m,b_m\}$ are perfectly estimated or not, the proposed DCP-based LCWRA algorithm always achieves the minimum classification error and the error reduction could be up to $50\%$.
This is because the proposed LCWRA finds that CNN is more difficult to train than SVM, and hence allocates more resources to user $1$.
The proposed LCWRA collects $1253$ MNIST image samples, while the throughput fairness and time fairness schemes only collect $399$ and $553$ samples, respectively (when $(T_{\rm{max}},E_{\rm{max}},P_{\rm{max}}) = (50, 1, 0.06)$).
To verify the performance of the proposed ranking-based LCWRA, we consider the case of $E_{\rm{max}} =10$ and $P_{\rm{max}} = 0.03$, and the classification error versus the total time budget $T_{\rm{max}} = (25, 50, 75, 100)$ in second is shown in Fig.~2b.
It can be seen from Fig.~2b that the proposed ranking-based LCWRA reduces the classification error by at least $20\%$.

\begin{figure*}
	\centering
	\subfigure[]{
		\label{fig:subfig:b} 
		\includegraphics[width=3.3in]{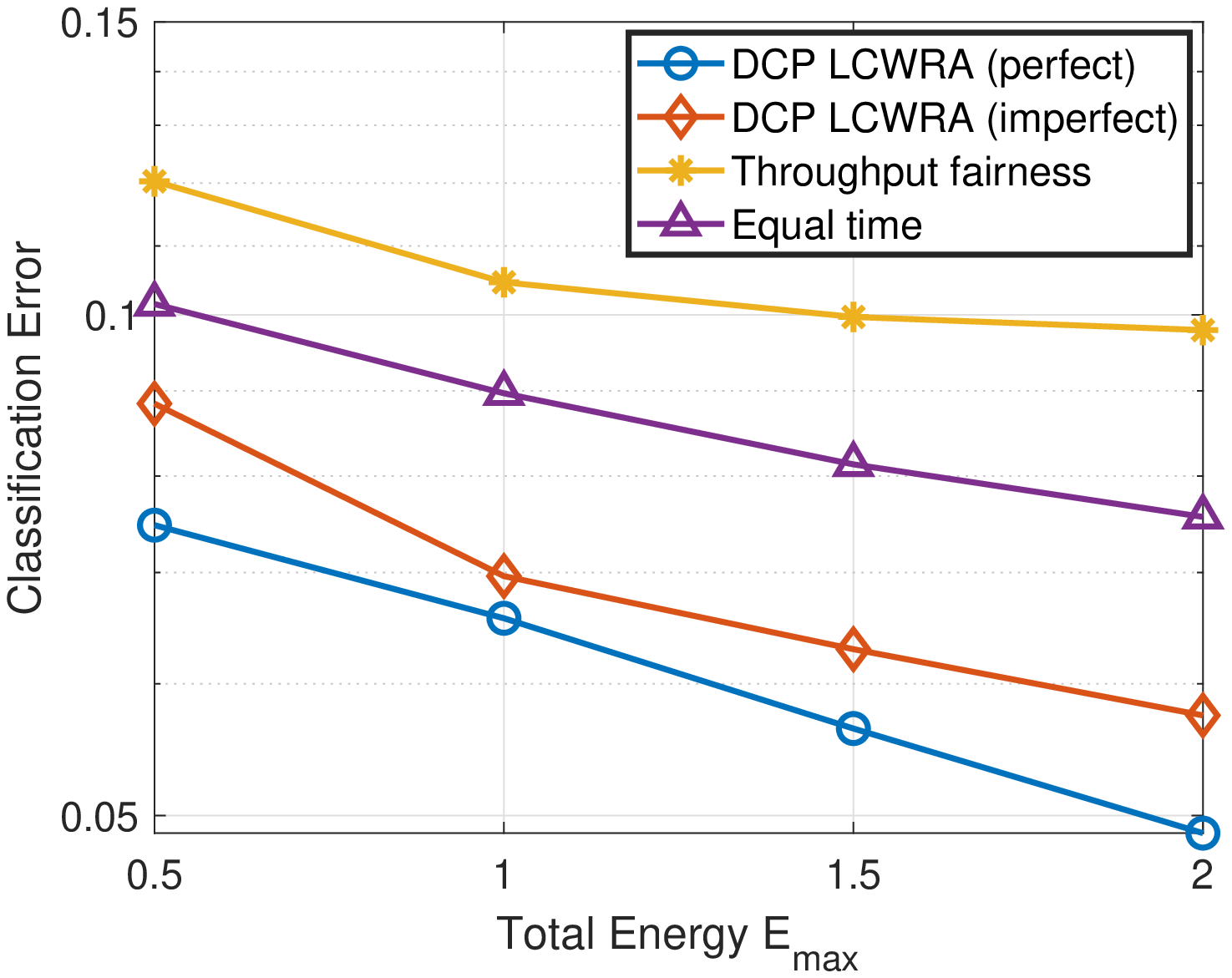}}
	\subfigure[]{
		\label{fig:subfig:b} 
		\includegraphics[width=3.3in]{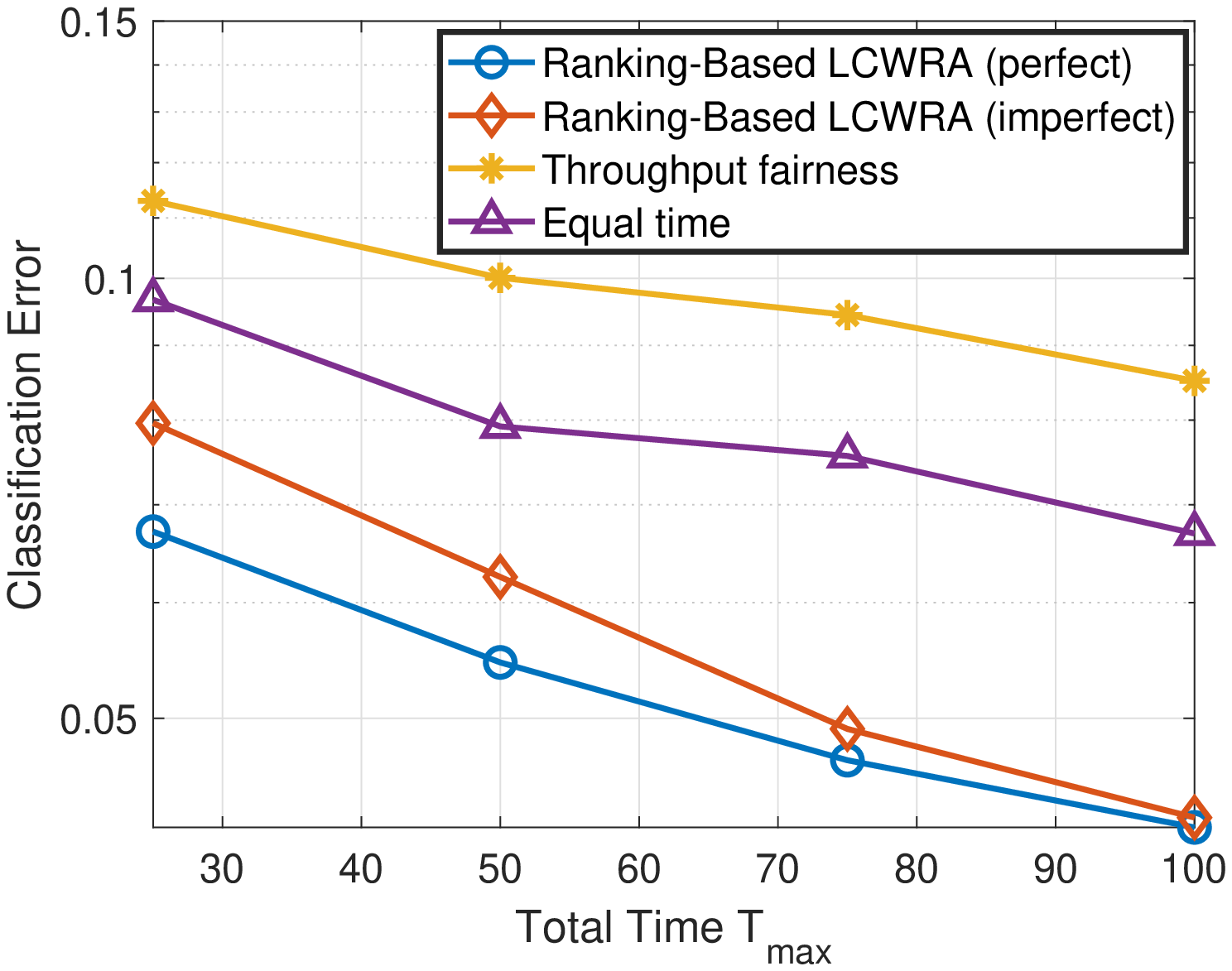}}
	\caption{Classification versus a) total energy at $T_{\rm{max}} = 50~\rm{s}$ and $P_{\rm{max}} = (0.03, 0.06, 0.09, 0.12)~\rm{W}$; b) total time at $E_{\rm{max}} = 100~\rm{J}$ and $P_{\rm{max}} = 0.03~\rm{W}$; }
	\label{simulation} 
\end{figure*}

Finally, to demonstrate the effect of the number of nodes in the IoT application, we simulate 1) the case of $K=4$ with $\mathcal{Y}_1=\{1\}$ and $\mathcal{Y}_2=\{2,3,4\}$, and 2) the case of $K=6$ with $\mathcal{Y}_1=\{1,5,6\}$ and $\mathcal{Y}_2=\{2,3,4\}$.
The worse error rate of the two tasks is 4.62\% for the case of $K=6$ and 5.23\% for the case of $K=4$.
This indicates that more IoT nodes will improve the performance of the system.
This is because more IoT nodes can generate more sensing data of various environment, and the edge could obtain a better training dataset by allocating resources to the new users $\{5,6\}$.

\section{Experimental Results}

\begin{table*}[!t]
	\caption{Experimental Results}
	\begin{center}
		\begin{tabular}{|c|c|c|c|c|c|c|c|c|c|}
			\hline
			\textbf{Scheme} & \textbf{Task}  &\multicolumn{4}{c|}{\textbf{Communication}}&\multicolumn{4}{c|}{\textbf{Learning}} \\
			\hline
			 Name & Name &\tabincell{c}{Time \\ (s)} &\tabincell{c}{Rate \\ (sample/s)} &\tabincell{c}{Total Numbr\\of Samples} &\tabincell{c}{Number\\of Samples} &\tabincell{c}{Online \\  Accuracy} &\tabincell{c}{Minimum \\ Online \\ Accuracy} &\tabincell{c}{Offline \\ Accuracy} &\tabincell{c}{Minimum \\ Offline \\ Accuracy} \\
			\hline
			\multirow{2}*{Time Fairness}&
			CNN& $30$& $5$
			&\multirow{2}*{450}
			&140& 65.3
			&\multirow{2}*{65.3}& 75.5 &\multirow{2}*{75.5}\\
			\cline{2-4} \cline{6-7} \cline{9-9}
			&SVM& $30$& $10$
			& &410 &94.97 & &94.97 & \\
			\hline
			\multirow{2}*{LCWRA}&
			CNN& $48$& $5$
			&\multirow{2}*{\tabincell{c}{360 \\$(-20\%)$}}
			&210& 74.5
			&\multirow{2}*{\tabincell{c}{74.5 \\$(+9.2\%)$}}& 81.1 &\multirow{2}*{\tabincell{c}{81.1 \\$(+5.6\%)$}}\\
			\cline{2-4} \cline{6-7} \cline{9-9}
			&SVM& $12$& $10$
			& &230 &85.5 & &85.5 & \\
			\hline
	\end{tabular}
		\label{tab1}
	\end{center}
\end{table*}

Due to the transmission overhead of control signals (which lead to reduced number of useful data) and random nature of wireless channels (e.g., disconnection/reconnection procedures), there is a mismatch between the simulated environment and the real-word environment.
Therefore, in this section, two experiments to analyze the robustness of our scheme against practical uncertainties and complex environment are elaborated.

\subsection{Robotic Image Dataset Collection}

As shown in Fig.~3a, our first experiment consists of an {mobile} edge server and two IoT users.
The edge server consists of a mobile robot (Turtlebot), a robot motion controller (Intel NUC), a computer (ThinkPad laptop), and a WiFi access point (Huawei 5G CPE Pro).
The IoT user consists of a data storage module (Raspberry Pi 3 Model B) and a WiFi transmitter (with a $60~\rm{dB}$ attenuator to simulate the low transmit power of IoT devices).
There are $1000$ MNIST data samples for training CNN and $1000$ digit samples for training SVM at user 1 and user 2 respectively.

All the functions are implemented using Python on a Ubuntu operating system.
The communication between the edge and the IoT users is established using the IP address and the socket framework.
A model training function (implemented based on Tensorflow), a data collection function (which tries to receive the wireless packets) and a buffer control function (which stores the collected data in a queue and updates the data batch before each training iteration) are developed at the edge server.
A data transmission function (which tries to send the wireless packets) and a connection/reconnection function (which periodically searches for the target edge) are developed at each user.

In our experiment, the ranking-based scheme is implemented, which allocates the communication time according to \textbf{Theorem 2}.
The total communication time is set to $1$ minute.
The data-rates at users are fixed to $R_{1} = 5D_{1}~\rm{bps}$ and $R_{2} = 10D_{2}~\rm{bps}$ (i.e., user $1$ transmits $5$ samples per second and user $2$ transmits $10$ samples per second).
The experiment is conducted in an office building shown in Fig.~3b, where user 1 is placed in the conference room and user 2 is placed in the corridor.
The edge server has no historical data (i.e., $c_{1} = c_{2} = 0$) and it cannot connect to both users at the initial location.
As a result, the robot needs to move according to the trajectory defined in the ROS of Intel NUC, and communicate with different users at different locations.
In our experiment, the server first enters the conference room to collect MNIST samples from user $1$, and then enters the corridor to collect digit samples from user $2$.
Based on the above setup, the experimental results are shown in Table~I.
Compared with the time fairness scheme, the proposed LCWRA scheme improves the online accuracy (i.e., we test the trained models immediately after data collection) and the offline accuracy (we train the learning models until convergence and then test the models) by $9.2\%$ and $5.6\%$, respectively.
The experiment demonstrates that the proposed LCWRA is robust against the uncertainty factors in real environment.

\subsection{Vehicular Point-Cloud Dataset Collection}

\begin{figure*}
	\centering
	\subfigure[]{
		\label{fig:subfig:b} 
		\includegraphics[height=2.5in]{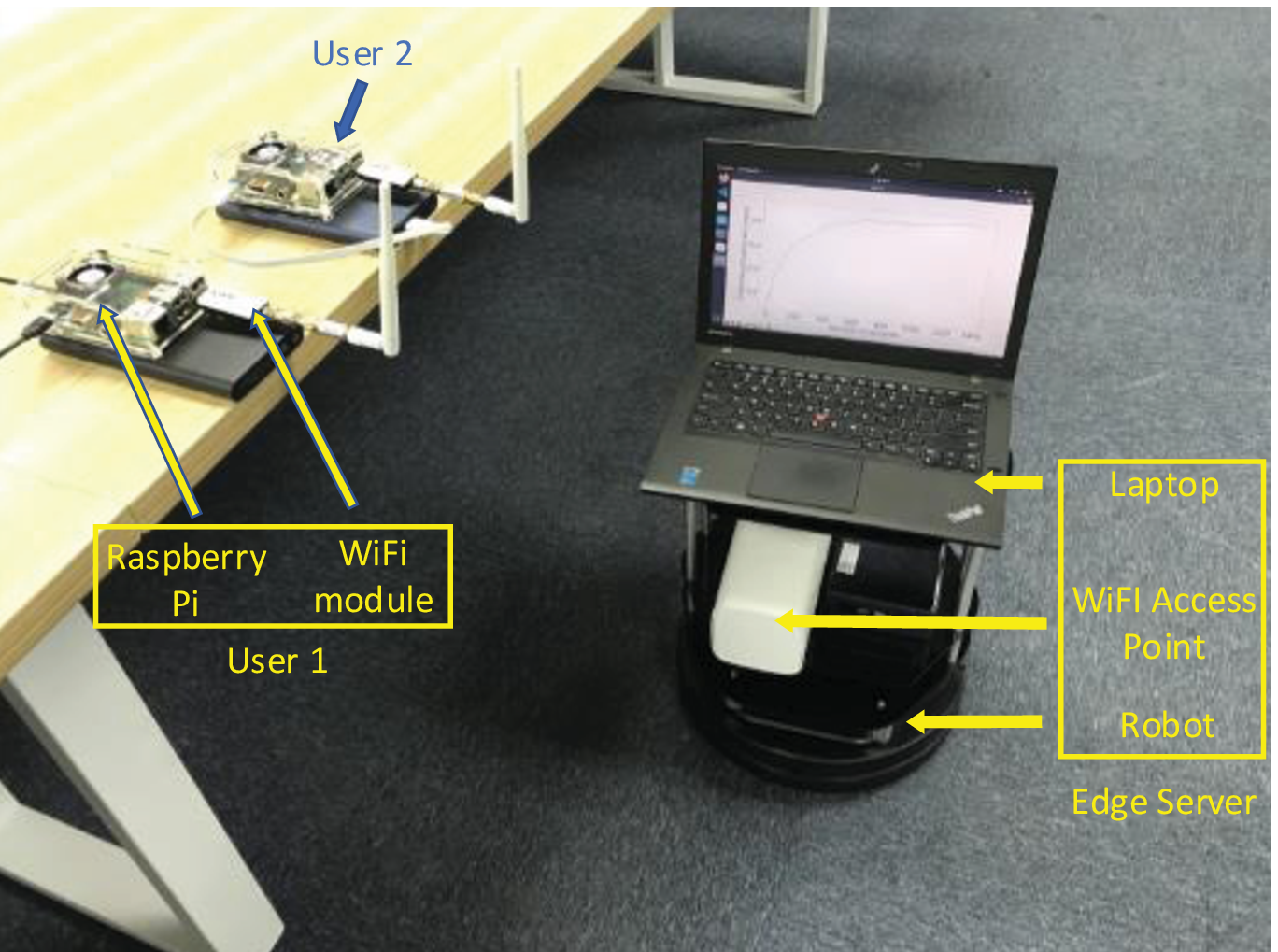}}
	\subfigure[]{
		\label{fig:subfig:b} 
		\includegraphics[height=2.5in]{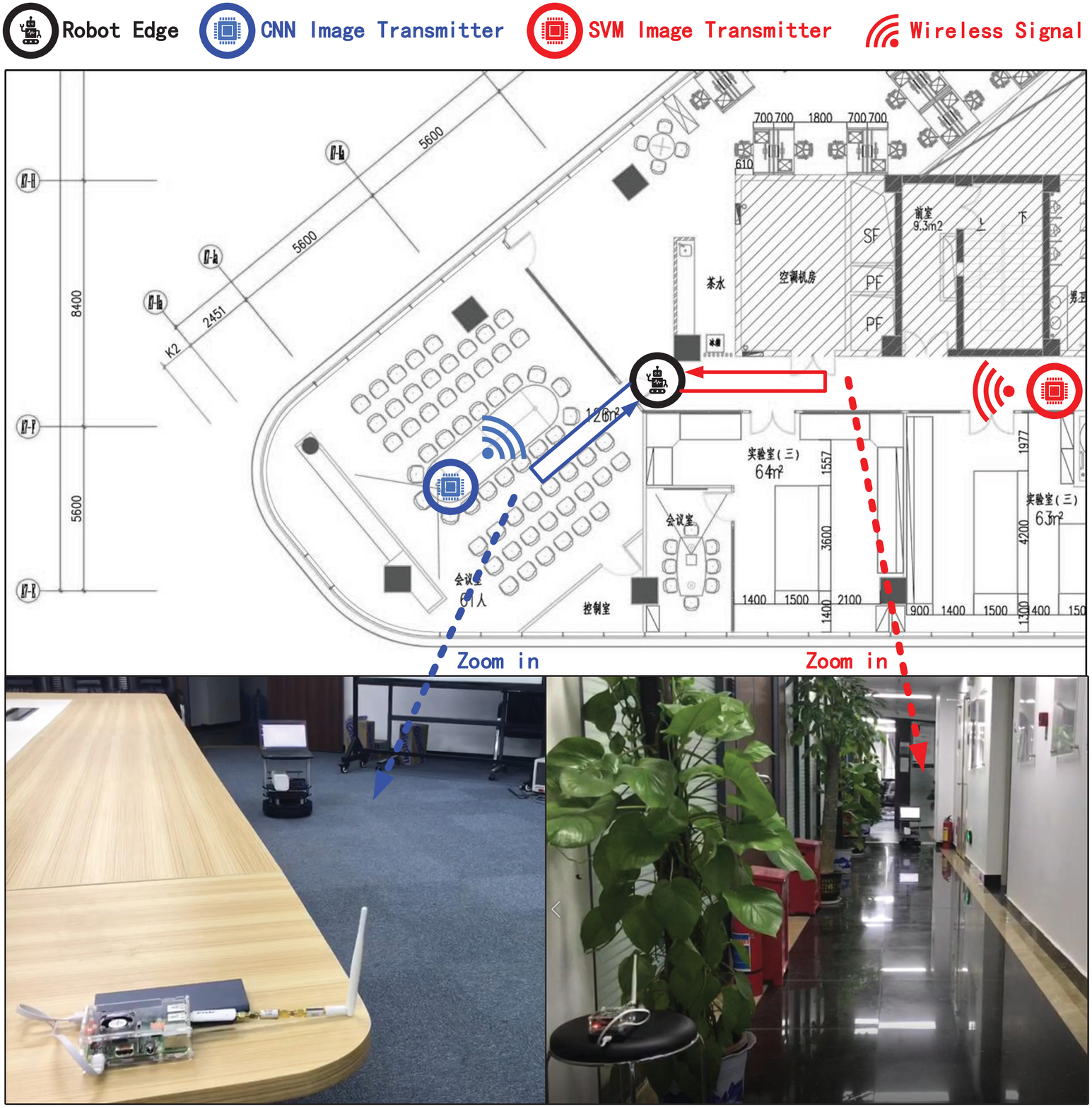}}
	\caption{a) Experimental setup of the edge computing system; b) the edge data collection and training procedure.}
	\label{simulation} 
\end{figure*}

\begin{figure*}
	\centering
	\subfigure[]{
		\label{fig:subfig:task1_lcta} 
		\includegraphics[width=1.7in]{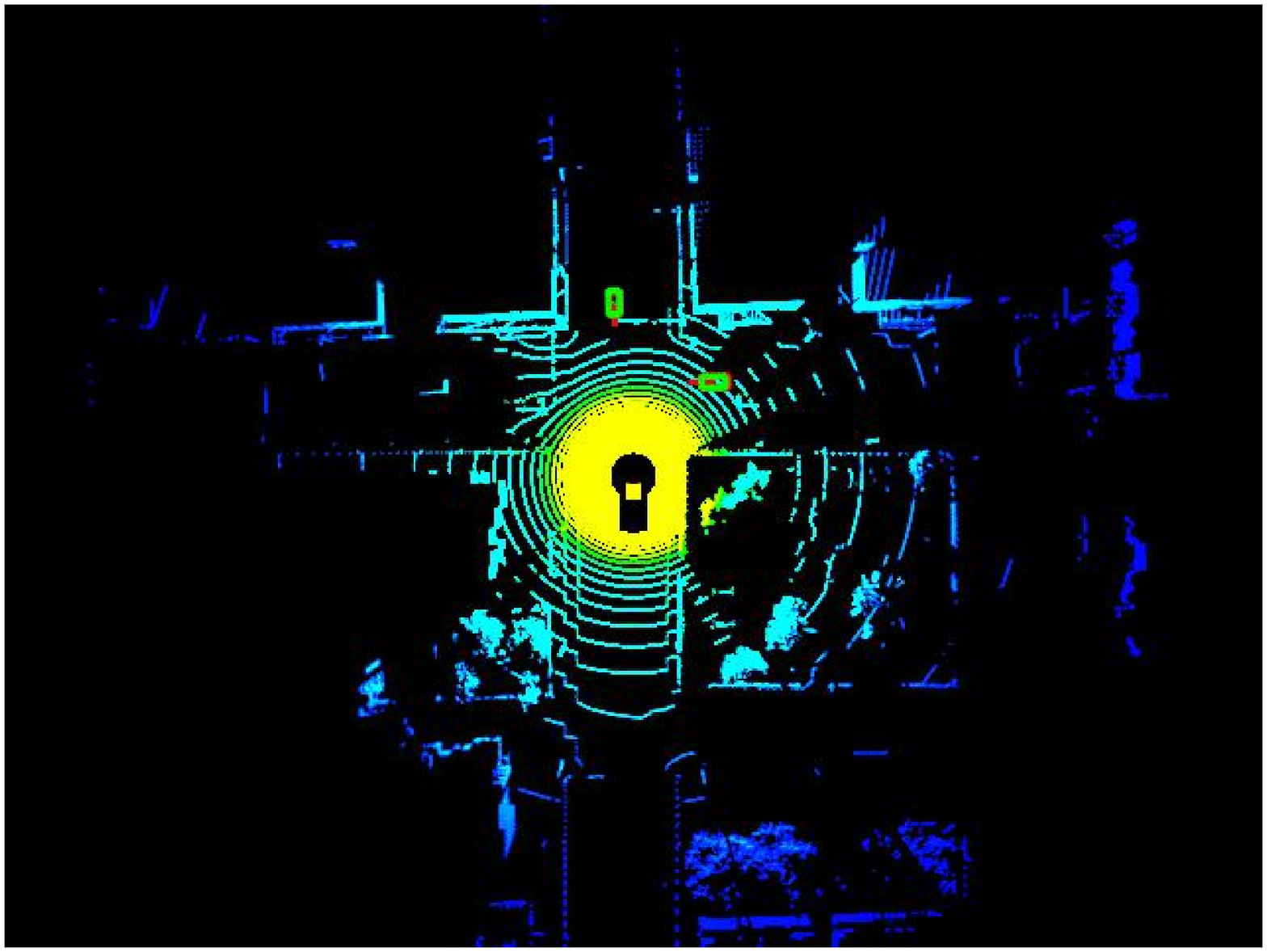}}
	\subfigure[]{
		\label{fig:subfig:task1_equal} 
		\includegraphics[width=1.7in]{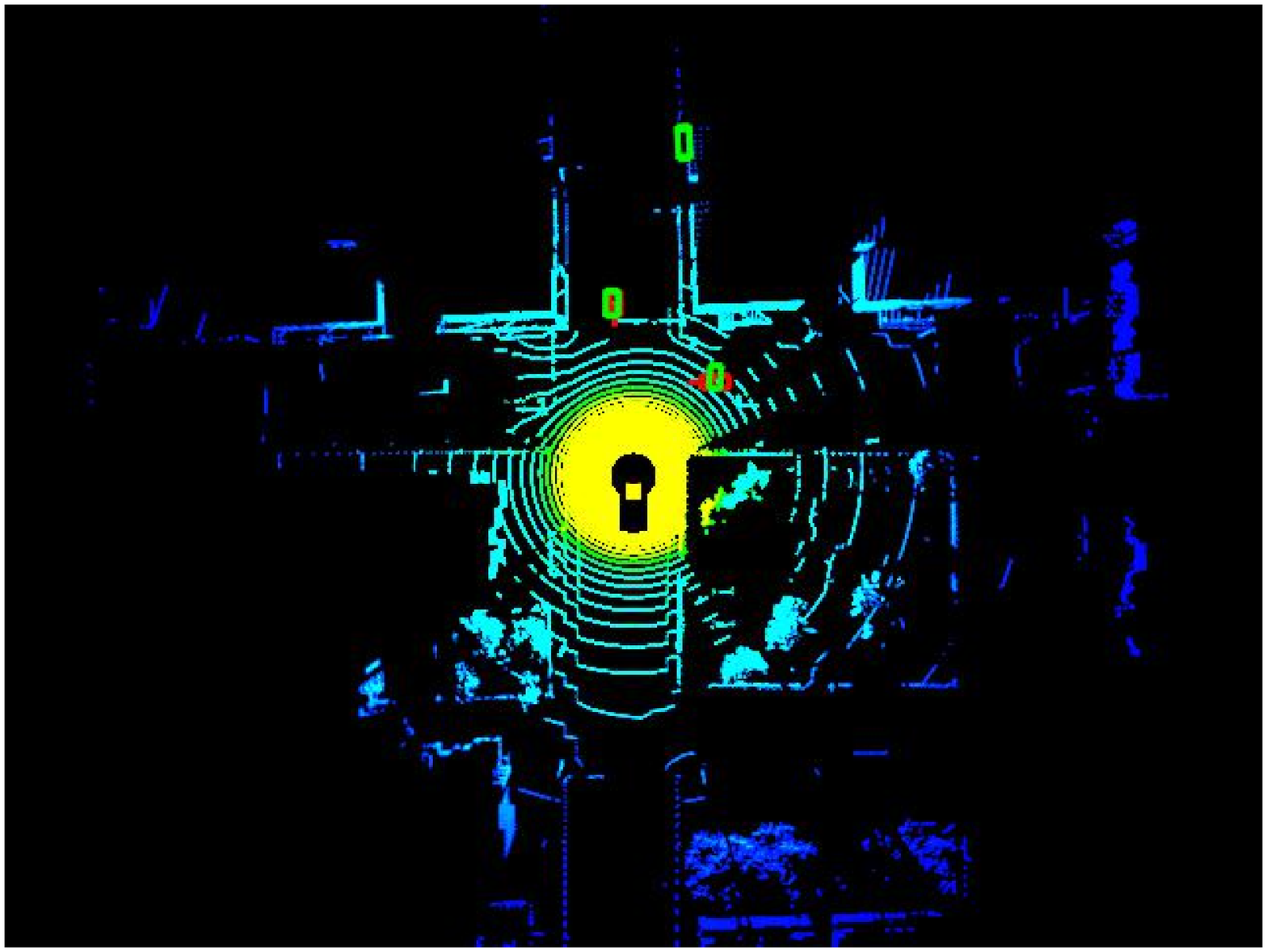}}
	\subfigure[]{
		\label{fig:subfig:task2_lcta} 
		\includegraphics[width=1.7in]{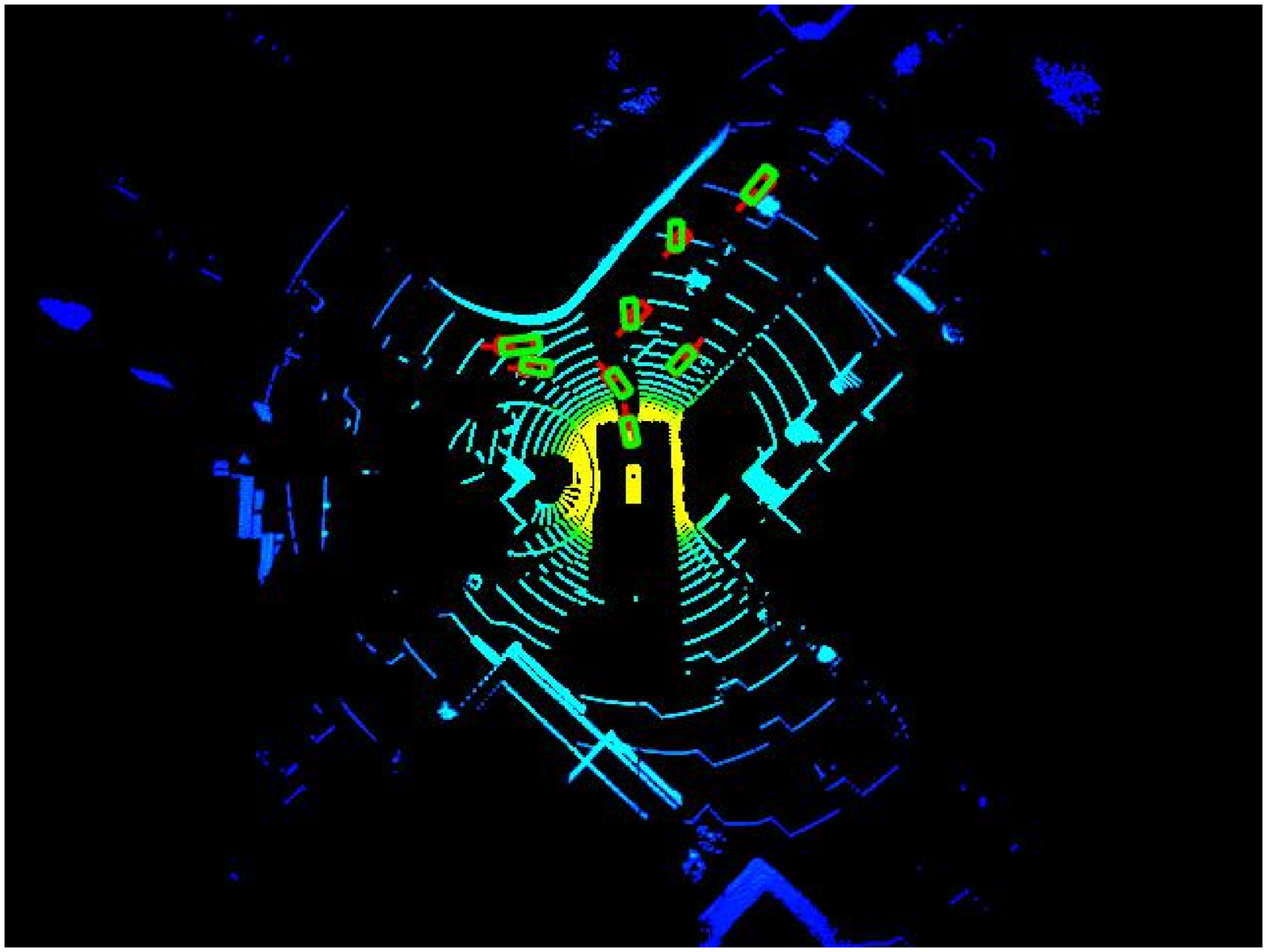}}
	\subfigure[]{
		\label{fig:subfig:task2_equal} 
		\includegraphics[width=1.7in]{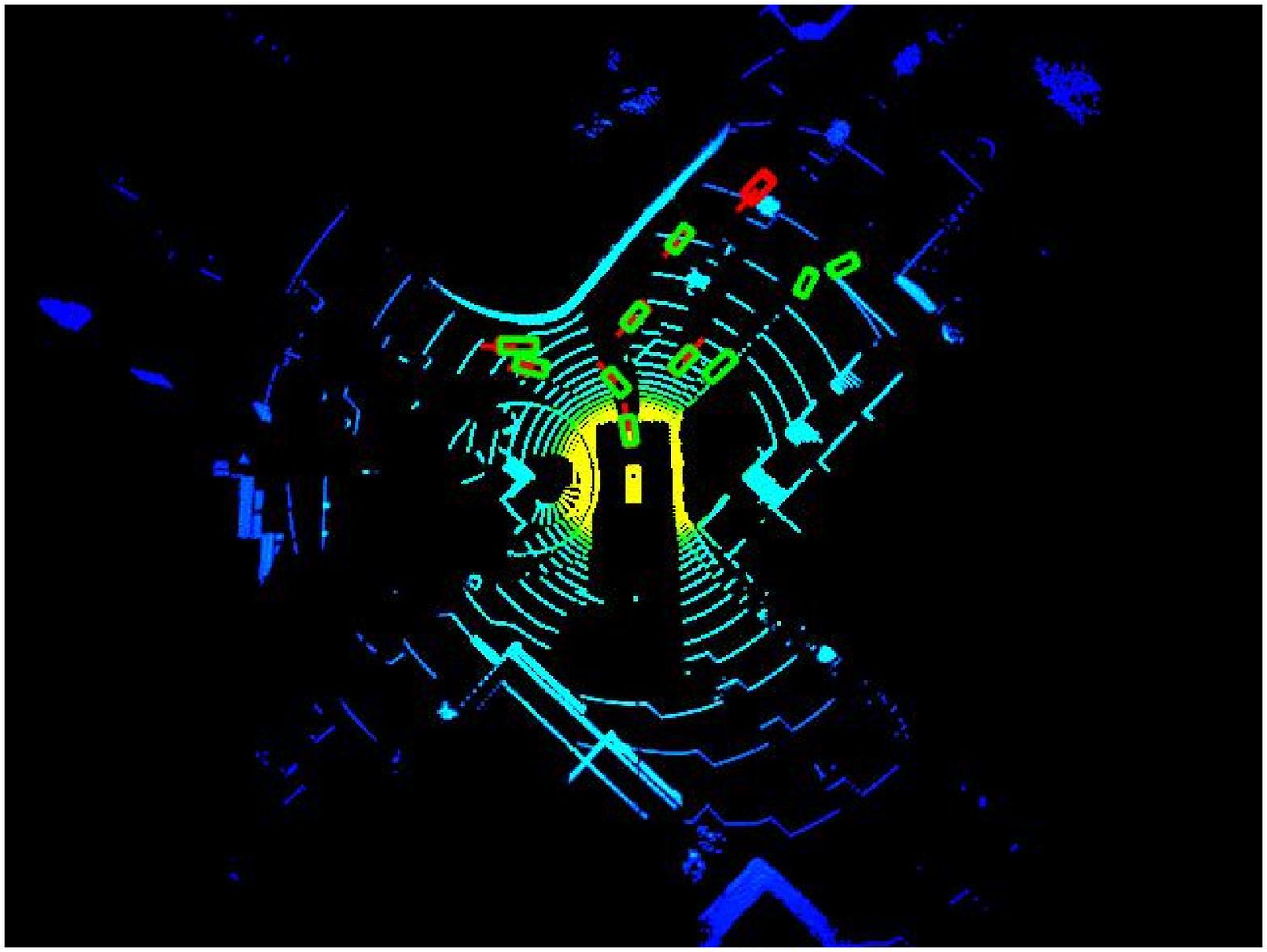}}
	\caption{Object detection results: a) task 1 with the LCWRA scheme; b) task 1 with the equal time allocation scheme; c) task 2 with the LCWRA scheme; d) task 2 with the equal time allocation scheme.}
	\label{carla} 
\end{figure*}

To verify the proposed LCWRA scheme in a more complex environment, we consider the use case of vehicular point-cloud dataset collection in Section II.
Specifically, the case of $K=2$ with two different traffic scenarios is simulated: 1) sparse traffic scenario, and 2) dense traffic scenario.
In each scenario, an autonomous driving car senses the environment, and generates on-board LiDAR point clouds.
The sensing data is then uploaded to a network edge for model training.
It is assumed that the transmission rates (controlled by the edge) at vehicles are fixed to $R_1=R_2=10~$samples/s.
The total transmission time is $T_{\mathrm{max}}=16~$s.
Notice that due to different traffics and environments, the learning error models for the two scenarios are different.

To simulate the two scenarios mentioned above, we employ the CARLA platform \cite{C5} for dataset generation and the SECOND neural network \cite{A5,shaoshuai} for object detection.
Specifically, CARLA is an open-source simulator that supports development, training, and validation of autonomous driving systems.
On the other hand, SECOND net is a voxel-based neural network that converts a point cloud to voxel features, and feeds the voxels into feature extraction and region proposal networks.
As CARLA and SECOND are not compatible, we develop a data transformation module, such that the transformed dataset satisfies the KITTI standard \cite{A6}.

Based on the above platform, we train the SECOND network for 50 epochs with a learning rate between $10^{-4}$ and $10^{-6}$ under different number of samples.
Following similar procedures in Section V, the parameters in the error models for task 1 and task 2 are given by (3.95, 0.5) and (3.11, 0.71), respectively.
Then, by executing our proposed algorithms, the numbers of collected data samples for the two tasks are obtained as 137 and 22.
In contrast, with the equal-time allocation scheme, $80$ samples are collected for both task 1 and task 2.
The trained models are tested on two validation datasets, each with 193 unseen samples ($>1000$ objects).
The worse learning accuracies of the two tasks are $67.3\%$ for LCWRA algorithm and $46.84\%$ for equal time scheme, respectively.
It can be seen from the Fig.~4 that by using the LCWRA algorithm, the performance of task 1 is significantly improved while the performance of task 2 is still acceptable.

\section{Conclusions}

This paper introduced the LCWRA scheme for edge intelligence systems.
The joint energy and time allocation scheme was derived based on DCP and the asymptotic solution proved the inverse power relationship between the generalization error and the transmission time.
Simulation results showed that the proposed LCWRA scheme significantly outperforms existing resource allocation schemes.
Experiments demonstrated the robustness of the proposed algorithm in practical and complex environment.

\appendices
\section{Proof of Theorem 2}

First, if $\left(u^{\diamond}/a_{m} \right)^{-\frac{1}{b_{m}}} - c_{m}\leq 0$, the first constraint of $\textrm{P3[m]}$ is always satisfied and $\textrm{P3[m]}$ is equivalently written as
\begin{subequations}
	\begin{align}
	\min_{\{z_{i, m}\}}\quad&\sum_{i = 1}^{|\mathcal{Y}_m|} z_{i, m}, \nonumber\\
	\textrm{s.t.} \quad&U_{i,m}z_{i,m} \leq V_{i,m}, \quad i =1,...,|\mathcal{Y}_{m}|,  \\
	&z_{i,m} \geq 0, \quad i=1,...,|\mathcal{Y}_{m}|.
	\end{align}
\end{subequations}
It is clear that the optimal solution to the above problem is $z_{i, m}^*=0$.

On the other hand, if $\left( u^{\diamond}/a_{m} \right)^{-\frac{1}{b_{m}}} - c_{m}>0$, the Lagrangian of $\textrm{P3[m]}$ is given by
\begin{align}
\label{lag}
L&= \sum_{i = 1}^{|\mathcal{Y}_m|} z_{i, m}
+ \lambda_{m} \left( \left( \frac{u^{\diamond}}{a_{m}} \right)^{-\frac{1}{b_{m}}} - c_{m} - \sum_{i = 1}^{|\mathcal{Y}_m|} U_{i,m} \, z_{i,m} \right) \nonumber\\
&\quad{}{}
+ \sum_{i = 1}^{|\mathcal{Y}_m|} \beta_{i,m} \left( U_{i,m} \, z_{i,m} - V_{i,m} \right)
- \sum_{i=1}^{|\mathcal{Y}_m|} \gamma_{i,m} z_{i,m},	
\end{align}
where $\left\{ \lambda_{m}, \beta_{i,m} , \gamma_{i,m} \right \}$ are non-negative Lagrangian multipliers. According to KKT conditions, the following equations hold:
\begin{subequations}
\begin{align}
\label{dev_lag}
\frac{\partial L}{\partial z^{*}_{i,m}} = 1 - U_{i,m}\lambda_{m} + U_{i,m}\beta_{i,m} - \gamma_{i,m }  &= 0,\quad \forall i,
\\
\lambda_{m} \left( \left( \frac{u^{\diamond}}{a_{m}} \right)^{-\frac{1}{b_{m}}} - c_{m} - \sum_{i = 1}^{|\mathcal{Y}_m|} U_{i,m} \, z_{i,m} \right)&=0,
\label{dev_lag2}
\\
\beta_{i,m} \left( U_{i,m} \, z_{i,m} - V_{i,m} \right)&=0,\quad \forall i,
\label{dev_lag3}
\\
\gamma_{i,m} z_{i,m}&=0,\quad \forall i.
\label{dev_lag4}
\\
\left( \frac{u^{\diamond}}{a_{m}} \right)^{-\frac{1}{b_{m}}} - c_{m} - \sum_{i = 1}^{|\mathcal{Y}_m|} U_{i,m} \, z_{i,m} & \leq 0,\quad \forall i
\label{dev_lag5}
\\
U_{i,m} \, z_{i,m} - V_{i,m} & \leq 0,\quad \forall i,
\label{dev_lag6}
\\
z_{i,m} &\geq 0, \quad \forall i.
\label{dev_lag7}
\end{align}
\end{subequations}

Based on \eqref{dev_lag}--\eqref{dev_lag7}, we will first prove $\gamma_{1,m} = 0$ by contradiction.
Suppose that $ \gamma_{1,m} \neq 0$.
From \eqref{dev_lag4}, we obtain $z^{*}_{1,m} = 0$.
As $\left( u^{\diamond}/a_{m} \right)^{-\frac{1}{b_{m}}} - c_{m}>0$, and according to \eqref{dev_lag5}, we must have $ \sum_{i = 1}^{|\mathcal{Y}_m|} U_{i,m} \, z_{i,m} > 0$.
This means that there exists some $j$ such that $z^{*}_{j,m} > 0$.
Now, consider another solution $\{z_{i,m}'\}$:
\begin{equation}
\label{sub_t}
z_{i,m}' =
\begin{cases}
\frac{\Delta}{U_{1,m}} & i = 1 \\
z^{*}_{j,m} - \frac{\Delta}{U_{j,m}} & i = j \\
z^{*}_{i,m} & i \notin \left\{1,j\right\}
\end{cases},
\end{equation}
where $\Delta>0$ is a small constant.
Since 1) $\sum_{i}U_{i,m}z^{*}_{i,m} = \sum_{i } U_{i,m}z_{i,m}'$; 2) $\frac{\Delta}{U_{1,m}}\leq V_{i,m}$; and 3) $z_{i,m}'\geq 0$,  $\{z_{i,m}'\}$ is a feasible solution to $\textrm{P3[m]}$.
However, the objective $\sum_iz_{i,m}'=\sum_iz_{i,m}^*+\frac{\Delta}{U_{1,m}}-\frac{\Delta}{U_{j,m}}<\sum_iz_{i,m}^*$, where the inequality is due to $U_{1,m}>U_{j,m}$.
This contradicts to $\{z^{*}_{i,m}\}$ being optimal.

Next, putting $\gamma_{1,m}=0$ into \eqref{dev_lag}, we have
\begin{align}\label{beta0}
&U_{1,m}\lambda_{m} = 1+U_{1,m}\beta_{1,m}.
\end{align}
Since $U_{1,m} \beta_{1,m}\geq 0$, we have $\lambda_{m}>0$.
According \eqref{dev_lag2}, the following equation holds:
\begin{equation}
\label{lambda}
\sum_{ i = 1}^{|\mathcal{Y}_{m}|}U_{i,m}z^{*}_{i,m} = \left( \frac{u^{\diamond}}{a_{m}} \right)^{-\frac{1}{b_{m}}}-c_{m}.
\end{equation}
To derive $z^{*}_{i,m}$, we consider two cases.
\begin{itemize}
	\item If $\beta_{1,m} \neq 0$, then $z^{*}_{1,m} = \frac{V_{1,m}}{U_{1,m}}$ according to \eqref{dev_lag3}.
	In this case, $z^{*}_{1,m}\leq \frac{1}{U_{1,m}}\left( u^{\diamond}/a_{m} \right)^{-\frac{1}{b_{m}}} -\frac{c_{m}}{U_{1,m}}$ (which is obtained by putting $z^{*}_{1,m} = \frac{V_{1,m}}{U_{1,m}}$ into \eqref{lambda}).
	
	\item If $\beta_{1,m} = 0$, then $\lambda_{m} = \frac{1}{U_{1,m}}$ according to \eqref{beta0}.
	Putting this result into \eqref{dev_lag}, we have $1-\frac{U_{i,m}}{U_{1,m}}+U_{i,m}\beta_{i,m}-\gamma_{i,m} = 0$.
	But since $U_{1,m} > U_{i,m} >0$ for any $i>1$ and $\beta_{i,m}\geq 0$, we obtain $\gamma_{i,m} >0$, meaning that $z^{*}_{i,m} = 0$ for any $i >1$.
	Further due to \eqref{lambda}, we have $z^{*}_{1,m} = \frac{1}{U_{1,m}}\left(u^{\diamond}/a_{m} \right)^{-\frac{1}{b_{m}}} -\frac{c_{m}}{U_{1,m}}$.
	In this case, $z^{*}_{1,m}\leq \frac{V_{1,m}}{U_{1,m}}$ (which is obtained from \eqref{dev_lag6}).

\end{itemize}

Based on the above discussions, $z^{*}_{1,m}$ is given by
\begin{align}
\label{opt_t1}
z^{*}_{1,m} = \min\left\{ \frac{V_{1,m}}{U_{1,m}}, \, \frac{1}{U_{1,m}}\left( \frac{u^{\diamond}}{a_{m}} \right)^{-\frac{1}{b_{m}}} -\frac{c_{m}}{U_{1,m}} \right\}.
\end{align}

With the optimal $z^{*}_{1,m}$, the optimal $z^{*}_{j,m}$ with $j>1$ can be obtained by induction.
More specifically, assuming that the optimal $z^{*}_{1,m},\cdots,z^{*}_{j-1,m}$ is known at the $j$-th iteration, and defining
$S_{j,m}=\left(u^{\diamond}/a_{m}\right)^{-\frac{1}{b_{m}}} - c_{m}-\sum_{i = 1}^{j-1} U_{i,m}z_{i,m}^*$, problem $\textrm{P3[m]}$ is equivalently written into
\begin{subequations}
	\begin{align}
	\min_{\{z_{i, m}\}}\quad&\sum_{i= j}^{|\mathcal{Y}_m|} z_{i, m}, \nonumber\\
	\textrm{s.t.} \quad& \sum_{i = j}^{|\mathcal{Y}_{m}|} U_{i,m}z_{i,m} \geq S_{j,m}, \\
	&U_{i,m}z_{i,m} \leq V_{i,m},\quad i=j,...,|\mathcal{Y}_{m}|,  \\
	&z_{i,m} \geq 0, \quad i=j,...,|\mathcal{Y}_{m}|.
	\end{align}
\end{subequations}
It can be seen that the above problem has the same form as $\mathrm{P}3[m]$.
Furthermore, $U_{j,m}$ is the largest among $\{U_{j,m},\cdots,U_{|\mathcal{Y}_{m}|,m}\}$.
Therefore, the optimal $z^{*}_{j,m}$ can be obtained following a similar derivation in \eqref{opt_t1} and the proof is completed.

\end{document}